\newtheorem{theorem}{Theorem}
\newtheorem{lemma}{Lemma}
\newtheorem{definition}{Definition}
\newtheorem{remark}{Remark}
\newcommand{\R}{{\mathbb{R}}}
\newcommand{\N}{{\mathbb{N}}}
\newcommand{\eg}{{\it e.g.}}
\newcommand\norm[1]{\left\lVert#1\right\rVert}
\newcommand{\Let}{:=}
\newcommand{\Z}{\mathbb{Z}}
\newcommand{\intcc}[1]{\ensuremath{{\left[#1\right]}}}
\renewcommand{\emptyset}{{\varnothing}}
\title{
	{From dissipativity theory to compositional synthesis of symbolic models
		\thanks{This work was supported in part by the TUM International Graduate School of Science and Engineering (IGSSE).}}
}
\author{Abdalla Swikir, Antoine Girard, and Majid Zamani
	\thanks{A. Swikir and M. Zamani are with the Department of Electrical and Computer Engineering, Technical University of Munich, D-80290 Munich, Germany. A. Girard is with the Laboratoire des Signaux et Syst\'emes (L2S) - CNRS, 91192 Gif sur Yvette, France.
		Email: {\tt\small \{abdalla.swikir,zamani\}@tum.de, Antoine.Girard@l2s.centralesupelec.fr}.
}}
\begin{document}

	\maketitle
	%
	
	\begin{abstract}
		
		In this work, we introduce a compositional framework for the construction of finite abstractions (a.k.a. symbolic models) of
		interconnected discrete-time control systems. The compositional scheme is based on the joint dissipativity-type
		properties of discrete-time control subsystems and their finite abstractions. In the first part of the paper, we use a notion of so-called storage function as a relation between each subsystem and its finite abstraction to construct compositionally a notion of so-called simulation function as a relation between interconnected finite abstractions and that of control systems. The derived simulation function is used to quantify the error between the
		output behavior of the overall interconnected concrete system and that of its finite abstraction. 
		In the second part of the paper, we propose a technique to construct finite abstractions together with their corresponding storage functions for a class of discrete-time control systems under some incremental passivity property.
		We show that if a discrete-time control system is so-called incrementally passivable, then one can construct its finite abstraction by
		a suitable quantization of the input and state sets together with the corresponding storage function. Finally, the proposed results are illustrated by constructing a finite abstraction of a network of linear discrete-time control systems and its corresponding simulation function in a compositional way. The compositional conditions in this example do not impose any restriction on the gains or the number of the subsystems which, in particular, elucidates the effectiveness of dissipativity-type compositional reasoning for networks of systems.
	\end{abstract}

	\section{Introduction}

	In the recent years, symbolic models were introduced as a method to reduce the complexity of controller synthesis in particular for enforcing complex logical properties. Symbolic models (a.k.a. finite abstractions) are abstract descriptions of the continuous-space control systems in which each discrete state corresponds to a collection of continuous states of the original system. Since symbolic models are finite, algorithmic approaches from computer science are applicable to synthesize controllers enforcing some complex properties including those expressed as linear temporal logic formulae.     
	
	Large-scale interconnected control systems, \eg, biological networks, power networks, and manufacturing systems, are intrinsically difficult to analyze and control and it is very challenging to design a controller to achieve some complex logical specifications over those interconnected systems. An appropriate technique to overcome this challenge is to first treat every subsystem individually and build an abstraction that approximates the behaviors of the corresponding concrete subsystem. Thereafter, one can establish a compositional framework to construct abstractions of the network of control subsystems and use them as a replacement in the controller design process. Recently, there have been several results on the compositional construction of finite abstractions of networks of linear and nonlinear control systems in \cite{Tazaki2008,7403879}. Recent work on the compositional construction of infinite abstractions of interconnected nonlinear control systems can be found in \cite{Rungger,7496809}. The results in \cite{Tazaki2008,7403879,Rungger,7496809} use the small-gain type conditions to facilitate the compositional construction of (in)finite abstractions. However, those small-gain type conditions depend essentially on the size of the network graph and can be violated as the number of subsystems increases \cite{Das2004149}. The recent results in \cite{7857702} propose a compositional framework for the construction of infinite abstractions of networks of contiunous-time control systems using dissipativity theory \cite{murat}. The
proposed compositionality conditions in \cite{7857702} can enjoy specific interconnection
topologies and provide scale-free compositional
abstractions for large-scale control systems.
	
	In this work, we introduce a compositional approach for the construction of \emph{finite} abstractions of interconnected discrete-time control systems using techniques from dissipativity theory \cite{murat}. First, we introduce a notion of so-called storage function inspired by the one introduced in \cite{7857702} and use it to quantify the joint dissipativity-type properties of discrete-time control subsystems and their finite abstractions. Given storage functions between subsystems and their finite abstractions, we drive compositional conditions under which one can construct a so-called simulation function, similar to the one introduced in \cite{Girard2009566}, as a relation between the interconnected abstractions and the concrete network of control subsystems. The existence of such a simulation function ensures that the output behavior of the concrete system is quantitatively approximated by the corresponding one of its finite abstraction. In addition, we provide a procedure for the construction of finite abstractions together with their corresponding storage functions for a class of discrete-time control systems satisfying some incremental passivity property. Finally, we demonstrate the effectiveness of our results on an interconnected discrete-time linear control system in which the compositionality condition is always satisfied independently of the number of subsystems.
	
	
	\section{Notation and Preliminaries}\label{1:II}
	\subsection{Notation}
	We denote by $\R$, $\Z$, and $\N$ the set of real numbers, integers, and non-negative integers,  respectively.
	These symbols are annotated with subscripts to restrict them in
	the obvious way, \eg, $\R_{>0}$ denotes the positive real numbers. We denote the closed, open and half-open intervals in $\R$ by $[a,b]$,
	$(a,b)$, $[a,b)$, and $(a,b]$, respectively. For $a,b\in\N$ and $a\le b$, we
	use $[a;b]$, $(a;b)$, $[a;b)$, and $(a;b]$ to
	denote the corresponding intervals in $\N$.
	Given $N\in\N_{\ge1}$, vectors $\nu_i\in\R^{n_i}$, $n_i\in\N_{\ge1}$, and $i\in[1;N]$, we
	use $\nu=[\nu_1;\ldots;\nu_N]$ to denote the vector in $\R^N$ with
	$N=\sum_i n_i$ consisting of the concatenation of vectors~$\nu_i$. Note that given any  $\nu\in\R^{n}$, $\nu \ge 0$ iff $\nu_i \ge 0$ for any $i \in [1;n]$. We denote by $\mathsf{diag}(M_1,\ldots,M_N)$ the block diagonal matrix with diagonal matrix entries $M_1,\ldots,M_N$. We denote the identity and zero matrices in $\R^{n\times n}$ by $I_n$ and $0_{n}$, respectively. Given a function $f: \N \rightarrow \R^n $, the supremum of $ f $ is denoted by $ \|f\|_\infty $; we recall that $ \|f\|_\infty$ := sup$\{\|f(k)\|,k\geq0\}$, where $\norm{\cdot}$ denote the infinity norm. Given a function $f:\R^n\to \R^m$ and $\overline x\in\R^m$, we use $f\equiv \overline x$ to denote that $f(x)=\overline x$ for all $x\in\R^n$. If
	$\overline x$ is the zero vector, we simply write $f\equiv 0$. Given a set $A$ and matrix $P$ of appropriate dimension, $PA\Let\{Pa | a\in A\}$. The identity map on a set $A$ in denoted by $1_A$. The closed ball centered at $x\in{\mathbb{R}}^{n}$ with radius $\varepsilon$ is defined by \mbox{$\mathcal{B}_{\varepsilon}(x)=\{y\in{\mathbb{R}}^{n}\,|\,\Vert x-y\Vert\leq\varepsilon\}$}. For any set \mbox{$A\subseteq\R^n$} of the form of finite union of boxes, \eg, $A=\bigcup_{j=1}^MA_j$ for some $M\in\N$, where $A_j=\prod_{i=1}^n [c_i^j,d_i^j]\subseteq \R^n$ with $c^j_i<d^j_i$, and positive constant $\eta\leq\emph{span}(A)$, where $\emph{span}(A)=\min_{j=1,\ldots,M}\eta_{A_j}$ and \mbox{$\eta_{A_j}=\min\{|d_1^j-c_1^j|,\ldots,|d_n^j-c_n^j|\}$}, define \mbox{$[A]_{\eta}=\{a\in A\,\,|\,\,a_{i}=k_{i}\eta,k_{i}\in\mathbb{Z},i=1,\ldots,n\}$}. The set $[A]_{\eta}$ will be used as a finite approximation of the set $A$ with precision $\eta$. Note that $[A]_{\eta}\neq\varnothing$ for any $\eta\leq\emph{span}(A)$. 
	We denote by $|\cdot|$ the cardinality of a given set and by $\emptyset$ the empty set. 
	We use notations $\mathcal{K}$, $\mathcal{K}_\infty$, and $\mathcal{KL}$
	to denote the different classes of comparison functions, as follows:
	$\mathcal{K}=\{\alpha:\mathbb{R}_{\geq 0} \rightarrow \mathbb{R}_{\geq 0} |$ $ \alpha$ is continuous, strictly increasing, and $\alpha(0)=0\}$; $\mathcal{K}_\infty=\{\alpha \in \mathcal{K} |$ $ \lim\limits_{r \rightarrow \infty} \alpha(r)=\infty\}$; a function $\beta:\mathbb{R}_{\geq 0} \times\mathbb{N}\rightarrow \mathbb{R}_{\geq 0}$ is a $\mathcal{KL}$ function if, for each fixed $k \geq 0$, the function $\beta(\cdot,k)$ is a $\mathcal{K}$ function, and for each fixed $ r\neq0 $ the function $\beta(r,\cdot)$ is decreasing and $\beta(r,k) \rightarrow 0 $ as $ k\rightarrow\infty$. 
	
	\subsection{Discrete-time control systems} 
	In this paper we study discrete-time control systems of the following form.
	\begin{definition}\label{def:sys1}
		A discrete-time control system $\Sigma$ is defined by the tuple
		\begin{IEEEeqnarray}{c}\label{eq:1}
			\begin{IEEEeqnarraybox}[\relax][c]{rCl}
				\Sigma&=&(\mathbb X,\mathbb U,\mathbb W,\mathcal{U},\mathcal{W},f,\mathbb Y_{1},\mathbb Y_{2},h_1,h_2),
			\end{IEEEeqnarraybox}
		\end{IEEEeqnarray}
		where $\mathbb X, \mathbb U, \mathbb W, \mathbb Y_{1},$ and $\mathbb Y_{2}$ are the state set, external input set, internal input set, external output set, and internal output set, respectively, and are assumed to be subsets of normed vector spaces with appropriate finite dimensions. Sets $\mathcal{U}$ and $\mathcal{W}$ denote the set
		of all bounded input functions $\nu:\N\rightarrow \mathbb U$ and $\omega:\N\rightarrow \mathbb W$, respectively. The set-valued map $f: \mathbb X\times \mathbb U \times \mathbb W\rightrightarrows \mathbb X $ is called the transition function \cite{Rock0000}, $h_1:\mathbb X \rightarrow \mathbb Y_1$  is the external output map, and $h_2:\mathbb X\rightarrow \mathbb Y_2$ is the internal output map.
		The discrete-time conrol system $\Sigma $ is described by difference inclusions of the form
		\begin{IEEEeqnarray}{c}\label{eq:2}
		\Sigma:\left\{
			\begin{IEEEeqnarraybox}[\relax][c]{rCl}
				\mathbf{x}(k+1)&\in&f(\mathbf{x}(k),\nu(k),\omega(k))\\
				\mathbf{y}_1(k)&=&h_1(\mathbf{x}(k))\\
				\mathbf{y}_2(k)&=&h_2(\mathbf{x}(k)),
			\end{IEEEeqnarraybox}\right.
		\end{IEEEeqnarray}
		where $\mathbf{x}:\mathbb{N}\rightarrow \mathbb X $, $\mathbf{y_1}:\mathbb{N}\rightarrow \mathbb Y_1$, $\mathbf{y_2}:\mathbb{N}\rightarrow \mathbb Y_2$, $\nu\in\mathcal{U}$, and $\omega\in\mathcal{W}$ are the state signal, external output signal, internal output signal, external input signal, and internal input signal, respectively. 
		
		System $\Sigma=(\mathbb X,\mathbb U,\mathbb W,\mathcal{U},\mathcal{W},f,\mathbb Y_{1},\mathbb Y_{2},h_1,h_2)$ is called deterministic if $|f(x,u,w)|\leq1$ $ \forall x\in \mathbb X, \forall u\in \mathbb U, \forall w \in \mathbb W$, and non-deterministic otherwise. System $\Sigma$ is called blocking if $\exists x\in \mathbb X, \forall u\in \mathbb U, \forall w \in \mathbb W $ such that $|f(x,u,w)|=0$ and non-blocking if $|f(x,u,w)|\neq 0$ $ \forall x\in \mathbb X, \exists u\in \mathbb U, \exists w \in \mathbb W$. In this paper, we only deal with non-blocking systems. System $\Sigma$ is called finite if $\mathbb X,\mathbb U,\mathbb W$ are finite sets and infinite otherwise.
	\end{definition}

	\begin{remark}
		If $\Sigma$ does not have internal inputs and outputs, Definition \ref{def:sys1} reduces to the tuple $\Sigma=(\mathbb X,\mathbb U,\mathcal{U},f,\mathbb Y,h)$ and the set-valued map $f$ becomes $f:\mathbb X\times\mathbb U\rightrightarrows\mathbb X$. Correspondingly, \eqref{eq:2} reduces to:
		\begin{IEEEeqnarray}{c}\label{eq:3}
		\Sigma:\left\{
			\begin{IEEEeqnarraybox}[\relax][c]{rCl}
				\mathbf{x}(k+1)&\in&f(\mathbf{x}(k),\nu(k))\\
				\mathbf{y}(k)&=&h(\mathbf{x}(k)).
			\end{IEEEeqnarraybox}\right.
		\end{IEEEeqnarray}
		
	\end{remark}	
	
	\subsection{Storage and Simulation functions}
	First, we define a notion of so-called storage function, inspired by Definition 3.1 in \cite{7857702}, which quantifies the error between systems $\Sigma$ and $\hat{\Sigma}$ both with internal and external inputs and outputs.	
	\begin{definition}\label{def:DS}
		Consider systems
		$$\Sigma=(\mathbb X,\mathbb U,\mathbb W,\mathcal{U},\mathcal{W},f,\mathbb Y_{1},\mathbb Y_{2},h_1,h_2),$$
		and $$\hat{\Sigma}=(\mathbb{\hat{X}},\mathbb{\hat{U}},\mathbb{\hat{W}},\hat{\mathcal{U}},\hat{\mathcal{W}},\hat{f},\hat{\mathbb{Y}}_1,\hat{\mathbb{Y}}_2, \hat{h}_1,\hat{h}_2),$$ where $\hat{\mathbb{Y}}_1\subseteq{\mathbb{Y}}_1$. A continuous function $ \mathcal{S}:\mathbb X \times \mathbb{\hat{X}} \to \mathbb{R}_{\geq0} $ is called a storage function from $\hat{\Sigma}$ to $\Sigma$ if  $\forall x\in \mathbb X$ and $\forall \hat{x} \in \mathbb{\hat{X}}$ one has
		\begin{IEEEeqnarray}{c}\label{eq:STFC1}
			\begin{IEEEeqnarraybox}[\relax][c]{rCl}
				\alpha (\Vert h_1(x)-\hat{h}_1(\hat{x})\Vert ) &\leq& \mathcal{S}(x,\hat{x}),
			\end{IEEEeqnarraybox}
		\end{IEEEeqnarray}
		and $\forall x\in \mathbb X$, $\forall \hat x\in \mathbb{\hat{X}}$, $\forall \hat u\in\mathbb{\hat{U}}$, $\exists u\in\mathbb U$, $\forall w\in\mathbb W$, $\forall \hat w\in\mathbb{\hat{W}}$, $\forall x_d \in f(x,u,w), $ $\exists\hat{x}_{d} \in \hat{f}(\hat{x},\hat{u},\hat{w})$ such that one gets
		\begin{IEEEeqnarray}{c}\label{eq:STFC2}
			\begin{IEEEeqnarraybox}[\relax][c]{rCl}
				&&\mathcal{S}(x_d,\hat{x}_d)-\mathcal{S}(x,\hat{x})
				\leq-\sigma(\mathcal{S}(x,\hat{x}))+\rho_{ext}(\Vert \hat{u}\Vert )+\\
				&&\begin{bmatrix}
					Ww-\hat W\hat w\\
					h_2(x)-H\hat h_2(\hat x)
				\end{bmatrix}^T\overbrace{\begin{bmatrix}
						X^{11}&X^{12}\\
						X^{21}&X^{22}
				\end{bmatrix}}^{X:=}
				\begin{bmatrix}
					Ww-\hat W\hat w\\
					h_2(x)-H\hat h_2(\hat x)
				\end{bmatrix}+\epsilon,
			\end{IEEEeqnarraybox}
		\end{IEEEeqnarray}
		for some $\alpha,\sigma \in \mathcal{K}_{\infty}$, $\rho_{ext}  \in \mathcal{K}_{\infty}\cup \{0\} ,$ some matrices $W,\hat W,H$ of appropriate dimensions, some symmetric matrix $X$ of appropriate dimension with conformal block partitions $X^{ij}$, $i,j\in[1;2]$,  and some $\epsilon \in \mathbb{R}_{\geq 0}$. Here, system $\hat{\Sigma}$ is called an abstraction of $\Sigma$. 
	\end{definition}
	Note that $\hat{\Sigma}$ may be finite or infinite depending on cardinalities of sets $\mathbb{\hat{X}},\mathbb{\hat{U}},\mathbb{\hat{W}}$. Now, we define a notion of so-called simulation function, inspired by Definition 1 in \cite{Girard2009566}, which quantifies the error between systems $\Sigma$ and $\hat{\Sigma}$ both without internal inputs and outputs.
	\begin{definition}\label{def:SFD}
		Consider systems $\Sigma=(\mathbb X,\mathbb U,\mathcal{U},f,\mathbb Y,h)$ and  $\hat{\Sigma}=(\hat{\mathbb{X}},\hat{\mathbb{U}},\hat{\mathcal{U}},\hat{f},\hat{\mathbb{Y}},\hat{h})$. A continuous function $ V:\mathbb X\times \mathbb{\hat{X}} \to \mathbb{R}_{\geq0} $ is called a  simulation function from $\hat{\Sigma}$ to $\Sigma$ if $\forall x\in \mathbb X$ and $\forall \hat x\in\mathbb{\hat{X}}$ one has
		\begin{IEEEeqnarray}{c}\label{e:SFC1}
			\begin{IEEEeqnarraybox}[\relax][c]{rCl}
				\alpha (\Vert h(x)-\hat{h}(\hat{x})\Vert ) &\leq& V(x,\hat{x}),
			\end{IEEEeqnarraybox}
		\end{IEEEeqnarray}
		and $\forall x\in \mathbb X $, $\forall \hat x\in \mathbb{\hat{X}}$, $\forall \hat u\in\mathbb{\hat{U}}$, $\exists u\in\mathbb U$, $\forall x_d \in f(x,u)$, $\exists\hat{x}_{d} \in \hat{f}(\hat{x},\hat{u})$ such that one gets
		\begin{align}\label{e:SFC2}
		V(x_d,\hat{x}_d)-V(x,\hat{x})
		\leq -\sigma(V(x,\hat{x}))+\rho_{ext}(\Vert \hat{u}\Vert )+\varepsilon,
		\end{align}
		for some $\alpha , \sigma \in \mathcal{K}_{\infty}$, $\rho_{ext} \in \mathcal{K}_{\infty}\cup \{0\} $, and some $\varepsilon \in \mathbb{R}_{\geq 0}$.
	\end{definition}
	We say that a system $\hat{\Sigma} $ is approximately alternatingly simulated by a system $\Sigma $ or a system $\Sigma $ approximately alternatingly simulates a system $\hat{\Sigma} $, denoted by $\hat{\Sigma} \preceq _{\mathcal{AS}}  \Sigma$, if there
	exists a simulation function from $\hat{\Sigma} $ to $\Sigma $ as in Definition \ref{def:SFD}.
	
	In general the notions of storage functions in Definition \ref{def:DS}
	and simulation functions in Definition \ref{def:SFD} are not comparable. The former is established for systems with
	internal inputs and outputs while the latter is established only for
	systems without internal inputs and outputs. One can
	simply verify that both notions coincide for systems
	without internal inputs and outputs.

	Before we provide our first main result, we recall Lemma B.1 in \cite{Jiang2001857}  which is used later to show some of the results.
	\begin{lemma}\label{lemma1}
		For any function $\sigma\in\mathcal{K}_{\infty}$, there exists a function $\hat{\sigma}\in\mathcal{K}_{\infty}$ satisfying $\hat{\sigma}(s) \leq \sigma(s)$ for all $s\in \R_{\ge0}$, and\footnote{Here, $\mathcal{I}_d\in\mathcal{K}_{\infty}$ denotes the identity function.}  $\mathcal{I}_d-\hat{\sigma} \in \mathcal{K}$.
	\end{lemma}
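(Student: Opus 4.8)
The plan is to construct $\hat\sigma$ explicitly as the largest Lipschitz minorant of $\sigma$ having a Lipschitz constant strictly below $1$; this slope control is exactly what will force $\mathcal{I}_d-\hat\sigma$ to be strictly increasing. It is worth noting first why the naive guess $\hat\sigma=\tfrac12\min\{\mathcal{I}_d,\sigma\}$ fails: although it satisfies $\hat\sigma\le\sigma$ and $\hat\sigma\in\mathcal{K}_\infty$, a function $\sigma\in\mathcal{K}_\infty$ may stay below the identity yet have arbitrarily large \emph{local} slope, so $s\mapsto s-\hat\sigma(s)$ need not be increasing and the requirement $\mathcal{I}_d-\hat\sigma\in\mathcal{K}$ can be violated. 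This tells me that the essential point is to bound the increments of $\hat\sigma$ from above, uniformly, by a constant strictly less than one.

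Fix $L\in(0,1)$ (say $L=\tfrac12$) and define the infimal convolution
\[
\hat\sigma(s) \Let \inf_{t\ge 0}\bigl\{\sigma(t) + L\,|s-t|\bigr\}, \qquad s\in\R_{\ge0}.
\]
I would then check the needed properties in turn. Taking $t=s$ gives $\hat\sigma(s)\le\sigma(s)$, and evaluating at $t=0$ together with $\sigma(t)+Lt\ge 0$ gives $\hat\sigma(0)=0$. The standard infimal-convolution estimate shows $\hat\sigma$ is $L$-Lipschitz, hence continuous, and monotonicity of $\sigma$ transfers to $\hat\sigma$. Coercivity of $\sigma$ yields $\hat\sigma(s)\to\infty$: if $\hat\sigma$ were bounded by $M$, then near-minimisers $t_s$ would satisfy both $\sigma(t_s)\le M+1$ and $L\,|s-t_s|\le M+1$, which bounds $s$, a contradiction. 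Together with strict monotonicity (discussed below) these give $\hat\sigma\in\mathcal{K}_\infty$.

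Finally, the $L$-Lipschitz bound is precisely what delivers the last condition: for $s_1<s_2$,
\[
\bigl(s_2-\hat\sigma(s_2)\bigr)-\bigl(s_1-\hat\sigma(s_1)\bigr) \ge (s_2-s_1) - L(s_2-s_1) = (1-L)(s_2-s_1) > 0,
\]
so $\mathcal{I}_d-\hat\sigma$ is strictly increasing; it is continuous and vanishes at $0$, hence $\mathcal{I}_d-\hat\sigma\in\mathcal{K}$ as required. I expect the main obstacle to be verifying that $\hat\sigma$ is \emph{strictly} increasing rather than merely nondecreasing, since an infimal convolution can a priori flatten $\sigma$ on an interval. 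Ruling this out requires a short case analysis: assuming $\hat\sigma$ is constant on $[s_1,s_2]$, one uses that the minimiser in the definition of $\hat\sigma(s_2)$ is attained (the map $t\mapsto\sigma(t)+L|s-t|$ is continuous and coercive) and splits according to whether this minimiser lies below, inside, or above $[s_1,s_2]$; in each case strict monotonicity of $\sigma$ forces a contradiction with $\hat\sigma(s_1)=\hat\sigma(s_2)$. Everything else is routine once $L$ has been fixed strictly below $1$.
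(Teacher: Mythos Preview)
Your construction via the infimal convolution $\hat\sigma(s)=\inf_{t\ge0}\{\sigma(t)+L|s-t|\}$ with $L\in(0,1)$ is correct and yields all the stated properties. One small simplification: since $\sigma$ is strictly increasing, any $t>s$ gives $\sigma(t)+L(t-s)>\sigma(s)$, so the infimum is always attained on $[0,s]$. With this observation the strict monotonicity of $\hat\sigma$ follows in two lines (split on whether the minimiser for $s_2$ lies in $[0,s_1]$ or in $(s_1,s_2]$), and the ``main obstacle'' you anticipate essentially disappears.

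As for the comparison: the paper does not prove this lemma at all; it simply quotes it as Lemma~B.1 of Jiang and Wang (2001) and uses it as a black box in the proof of Theorem~\ref{thm:1}. Your argument is therefore a genuine, self-contained addition rather than a variation on the paper's approach. The infimal-convolution route is clean because the single Lipschitz bound $L<1$ simultaneously delivers continuity, the upper bound $\hat\sigma\le\sigma$, and the strict increase of $\mathcal{I}_d-\hat\sigma$; constructions more common in the ISS literature (e.g.\ integral averaging of $\min\{\sigma,\mathcal{I}_d\}$) achieve the same end but require separate arguments for each property.
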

	The next theorem shows the importance of the existence of a simulation function.
	\begin{theorem}\label{thm:1}
		Consider systems $\Sigma=(\mathbb X,\mathbb U,\mathcal{U},f,\mathbb Y,h)$ and $\hat{\Sigma}=(\mathbb{\hat{X}},\mathbb{\hat{U}},\hat{\mathcal{U}},\hat{f},\hat{\mathbb{Y}}, \hat{h})$. Suppose $V$ is a simulation function from $\hat{\Sigma}$ to $\Sigma$. Then there exist a constant $\varphi \in \mathbb{R}_{\geq 0}$, functions $\gamma_{ext} \in\mathcal{K}_{\infty}\cup\{0\}$ and $\lambda\in\mathcal{K}_\infty$, where $\lambda(s)<s$ $\forall s\in\R_{\geq0}$, such that for any $x\in \mathbb X$, $\hat{x} \in \mathbb{\hat{X}}$,  $\hat{u} \in \hat{\mathbb{U}}$, there exits $ u \in \mathbb{U} $ so that for any $x_d \in f(x,u)$ in $\Sigma$ there exists $\hat{x}_{d} \in \hat{f}(\hat{x},\hat{u})$ in $\hat{\Sigma}$ such that
		\begin{align}\label{eq:11}
		\alpha(\Vert h(x_d)-\hat{h}(\hat{x}_{d})\Vert )\leq& V(x_d,\hat{x}_d)\\\notag\leq&
		\max\{\lambda(V(x,\hat{x})),\gamma_{ext}(\Vert \hat{u}\Vert )+\varphi\}.
		\end{align}
	\end{theorem}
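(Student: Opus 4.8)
The plan is to establish the two inequalities in \eqref{eq:11} separately: the left one follows at once from the first defining property of the simulation function, while the right one is extracted from the dissipation inequality \eqref{e:SFC2} by a change-of-supply argument. The left inequality $\alpha(\Vert h(x_d)-\hat h(\hat x_d)\Vert)\le V(x_d,\hat x_d)$ is simply \eqref{e:SFC1} evaluated at the successor pair $(x_d,\hat x_d)$, so once the quantifier structure of \eqref{e:SFC2} has selected a suitable $u$ (hence $x_d$) and a matching $\hat x_d$, nothing further is needed for it. The substance of the argument is therefore the max-form bound on $V(x_d,\hat x_d)$.

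First I would apply Lemma \ref{lemma1} to the function $\sigma$ in \eqref{e:SFC2}, obtaining $\hat\sigma\in\mathcal{K}_\infty$ with $\hat\sigma\le\sigma$ and $\mathcal{I}_d-\hat\sigma\in\mathcal{K}$. Since $-\sigma\le-\hat\sigma$, inequality \eqref{e:SFC2} gives $V(x_d,\hat x_d)\le V(x,\hat x)-\hat\sigma(V(x,\hat x))+\rho_{ext}(\Vert\hat u\Vert)+\varepsilon$, and I would then split according to the relative sizes of the decrease term and the input/offset term. Writing $s:=V(x,\hat x)$ and $r:=\Vert\hat u\Vert$, consider the two complementary cases $\tfrac12\hat\sigma(s)\ge\rho_{ext}(r)+\varepsilon$ and $\tfrac12\hat\sigma(s)<\rho_{ext}(r)+\varepsilon$.

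In the first case the input/offset term is dominated by half the dissipation, so $V(x_d,\hat x_d)\le s-\hat\sigma(s)+\tfrac12\hat\sigma(s)=s-\tfrac12\hat\sigma(s)=:\lambda(s)$. The decomposition $\lambda=\tfrac12\mathcal{I}_d+\tfrac12(\mathcal{I}_d-\hat\sigma)$ exhibits $\lambda$ as a sum of a $\mathcal{K}_\infty$ function and a $\mathcal{K}$ function, hence $\lambda\in\mathcal{K}_\infty$, and $\lambda(s)<s$ for every $s>0$ since $\hat\sigma$ is positive there. In the second case $\hat\sigma(s)<2(\rho_{ext}(r)+\varepsilon)$ yields $s<\hat\sigma^{-1}(2(\rho_{ext}(r)+\varepsilon))$; dropping the nonpositive term $-\hat\sigma(s)$ then bounds $V(x_d,\hat x_d)\le\hat\sigma^{-1}(2(\rho_{ext}(r)+\varepsilon))+\rho_{ext}(r)+\varepsilon$. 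Applying the weak triangle inequality $\hat\sigma^{-1}(a+b)\le\hat\sigma^{-1}(2a)+\hat\sigma^{-1}(2b)$ separates the $r$-dependent part from the constant part, and I would set $\gamma_{ext}(r):=\hat\sigma^{-1}(4\rho_{ext}(r))+\rho_{ext}(r)$ and $\varphi:=\hat\sigma^{-1}(4\varepsilon)+\varepsilon$, so that $V(x_d,\hat x_d)\le\gamma_{ext}(r)+\varphi$ with $\gamma_{ext}\in\mathcal{K}_\infty\cup\{0\}$ and $\varphi\in\R_{\ge0}$. Since in either case $V(x_d,\hat x_d)$ is bounded by one of the two arguments of the maximum, \eqref{eq:11} follows.

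I expect the delicate points to be bookkeeping rather than conceptual. The main care is needed in verifying that the constructed $\lambda$ genuinely lies in $\mathcal{K}_\infty$ and satisfies the strict inequality $\lambda(s)<s$ — this is exactly where the property $\mathcal{I}_d-\hat\sigma\in\mathcal{K}$ furnished by Lemma \ref{lemma1} is indispensable — and in checking the degenerate sub-cases: if $\rho_{ext}\equiv0$ then $\gamma_{ext}\equiv0$ remains in $\mathcal{K}_\infty\cup\{0\}$, and the claimed $\lambda(s)<s$ should be read on $\R_{>0}$, since any $\mathcal{K}_\infty$ function vanishes at the origin. The constant $\tfrac12$ in the case split is arbitrary; any $\kappa\in(0,1)$ works and merely rescales $\lambda$ and $\gamma_{ext}$.
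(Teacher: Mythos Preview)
Your proof is correct and is essentially a streamlined instance of the paper's argument: both apply Lemma~\ref{lemma1} to replace $\sigma$ by $\hat\sigma$ with $\mathcal{I}_d-\hat\sigma\in\mathcal{K}$ and then split cases according to whether the dissipation dominates the input/offset, the paper doing so via an auxiliary $\psi\in\mathcal{K}_\infty$ with $\mathcal{I}_d-\psi\in\mathcal{K}_\infty$, while your choice $\tfrac12$ corresponds exactly to $\psi=\tfrac12\mathcal{I}_d$. The only structural difference is in the small-$V$ case, where the paper proves forward invariance of the sublevel set $\{V\le c\}$ with $c=\hat\sigma^{-1}(\psi^{-1}(\rho_{ext}(\Vert\hat u\Vert)+\varepsilon))$ to get $V(x_d,\hat x_d)\le c$, whereas you drop $-\hat\sigma(s)$ and bound $s$ directly; this gives slightly looser but perfectly admissible constants $\gamma_{ext}$ and $\varphi$.
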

	\begin{proof}
		Let $\psi$ be a $\mathcal{K}_{\infty}$ function such that $\mathcal{I}_d-\psi \in \mathcal{K}_{\infty}$ and define
		$c=\hat{\sigma}^{-1}(\psi^{-1})(\rho_{ext}(\Vert \hat{u}\Vert) +\varepsilon)$, where $\hat\sigma$ is given as in Lemma \ref{lemma1} for function $\sigma$ appearing in Definition \ref{def:SFD}.  
		Let 
		$D=\{(x,\hat{x}) \in \mathbb X \times \mathbb{\hat{X}} | V(x,\hat{x}) \leq c\}$.
		First, assume $ (x, \hat{x}) \in D $. 
		Then $ V(x, \hat{x})\leq c$, that is, $\psi (\hat{\sigma}( V(x, \hat{x})\leq \rho_{ext}(\Vert \hat{u}\Vert )+\varepsilon$. Since $(\mathcal{I}_d-\hat{\sigma}) \in \mathcal{K}$, and $\psi(\hat{\sigma}(c))=\rho_{ext}(\Vert \hat{u}\Vert )+\varepsilon$, and by using \eqref{e:SFC2}, one obtains
		\begin{IEEEeqnarray}{c}\notag
			\begin{IEEEeqnarraybox}[\relax][c]{rCl}
				V(x_d, \hat{x}_{d})&\leq& V(x,\hat{x}) -\hat{\sigma}(V(x,\hat{x}))+\rho_{ext}(\Vert \hat{u}\Vert )+\varepsilon\\
				&\leq& (\mathcal{I}_d-\hat{\sigma})(V(x,\hat{x})))+\rho_{ext}(\Vert \hat{u}\Vert )+\varepsilon\\
				&\leq&(\mathcal{I}_d-\hat{\sigma})(c)+\psi(\hat{\sigma}(c))\\
				&\leq&  c-\hat{\sigma}(c)+\psi(\hat{\sigma}(c))\\
				&\leq&-(\mathcal{I}_d- \psi)(\hat{\sigma}(c))+c\leq c,
			\end{IEEEeqnarraybox}
		\end{IEEEeqnarray}
		for all $x_d \in f(x,u)$ and some $\hat{x}_{d} \in \hat{f}(\hat{x},\hat{u})$.
		Using the definition of $c$, we have the following inequality		
		\begin{align}\label{eq:12}
		V(x_d, \hat{x}_d)\leq\hat{\sigma}^{-1}(\psi^{-1}(\rho_{ext}(\Vert \hat{u}\Vert )+\varepsilon)
		\leq\gamma_{ext}(\Vert\hat{u}\Vert )+\varphi,
		\end{align}		
		where $\gamma_{ext}(s)=\hat{\sigma}^{-1}(\psi^{-1}(2\rho_{ext}(s)))$ $\forall s\in\R_{\geq0}$, and $\varphi=\hat{\sigma}^{-1}(\psi^{-1}(2\varepsilon))$.
		Now assume $ (x, \hat{x}) \notin D $. Then $\psi(\hat{\sigma}(V(x,\hat{x})))\geq \rho_{ext}(\Vert \hat{u}\Vert )+\varepsilon$, and one has
		\begin{IEEEeqnarray}{c}\label{ine1}
			\begin{IEEEeqnarraybox}[\relax][a]{rCl}
				V(x_d, \hat{x}_{d})&\leq&V(x,\hat{x}) -\hat{\sigma}(V(x,\hat{x}))+\psi(\hat{\sigma}(V(x,\hat{x}))\\
				&\leq&V(x,\hat{x}) - (\mathcal{I}_d - \psi)(\hat{\sigma}(V(x,\hat{x})))\\
				&\leq& - \tilde{\psi}(V(x,\hat{x})) +V(x,\hat{x})\\
				&\leq& (\mathcal{I}_d - \tilde{\psi})(V(x,\hat{x})),
			\end{IEEEeqnarraybox}
		\end{IEEEeqnarray}
		for all $x_d \in f(x,u)$ and some $\hat{x}_{d} \in \hat{f}(\hat{x},\hat{u})$, where $\tilde{\psi}(s):=(\mathcal{I}_d -\psi)(\hat{\sigma}(s))$ $\forall s\in\R_{\geq0}$.
		Observe that $(\mathcal{I}_d - \tilde{\psi})$ is a $  \mathcal{K}_\infty$ function since $\mathcal{I}_d -\psi$ and $\hat{\sigma}$ are $\mathcal{K}_\infty$ functions and $(\mathcal{I}_d - \tilde{\psi})(s)<s$ $\forall s\in\R_{\geq0}$.
		From \eqref{ine1} and by defining $\lambda(s)=(\mathcal{I}_d - \tilde{\psi})(s)$ $\forall s\in\R_{\geq0}$, one gets
		\begin{IEEEeqnarray}{c}\label{eq:13}
			\begin{IEEEeqnarraybox}[\relax][a]{rCl}
				V(x_d,\hat{x}_{d})\leq  \lambda(V(x,\hat{x})).
			\end{IEEEeqnarraybox}
		\end{IEEEeqnarray}
		Combining \eqref{eq:12} and \eqref{eq:13}, and by using \eqref{e:SFC1}, one gets
		\begin{align}\notag
		\alpha(\Vert h(x_d)-\hat{h}(\hat{x}_{d})\Vert )&\leq V(x_d,\hat{x}_d)\\\notag\leq& \max\{\lambda(V(x,\hat{x})),\gamma_{ext}(\Vert \hat{u}\Vert )+\varphi\},
		\end{align}which completes the proof.
	\end{proof}
		\begin{remark}
			Assume that $\exists v\in \R_{>0}$ such that $\Vert \hat{u} \Vert \leq v$ $ \forall \hat{u} \in \mathbb{\hat{U}}$. Then Theorem \ref{thm:1} implies that the relation $R\subseteq\mathbb{X}\times \hat{\mathbb{X}}$ defined by $R=\{(x,\hat{x})\in \mathbb{X}\times \hat{\mathbb{X}}|V(x,\hat{x})\leq \gamma_{ext}(v)+\varphi\}$ is an $\hat\varepsilon$-approximate alternating simulation relation, defined in \cite{Tabu}, from $\hat{\Sigma}$ to $\Sigma$ with $\hat\varepsilon=\alpha^{-1}(\gamma_{ext}(v)+\varphi)$.
		\end{remark}
				
	\section{Compositionality Result}\label{1:III}
	\label{s:inter}
	In this section, we analyze networks of discrete-time control systems and show
	how to construct a simulation function from a network of their abstractions to the concrete network by using storage functions of the subsystems. The
	definition of the network of discrete-time control systems is based on the notion of
	interconnected systems described in~\cite{murat}. 
	
	\subsection{Interconnected systems}
	
	Here, we define the \emph{interconnected discrete-time control system} as the following.
	
	\begin{definition}\label{d:ics}
		Consider $N\in\N_{\geq1}$ discrete-time control subsystems $
		\Sigma_i=\left(\mathbb X_i,\mathbb U_i,\mathbb W_i,\mathcal{U}_i,\mathcal{W}_i,f_i,\mathbb Y_{1i},\mathbb Y_{2i},h_{1i},h_{2i}\right)$,
		$i\in\intcc{1;N}$, and a static matrix $M$ of an appropriate dimension defining the coupling of these subsystems, where\footnote{This condition is required to have a well-defined interconnection.} $M\prod_{i=1}^N \mathbb Y_{2i} \subseteq \prod_{i=1}^N \mathbb W_{i}$. The \emph{interconnected discrete-time control
			system} $\Sigma=\left(\mathbb X,\mathbb U,\mathcal{U},f,\mathbb Y,h\right)$, denoted by
		$\mathcal{I}(\Sigma_1,\ldots,\Sigma_N)$, follows by $ \mathbb X =\prod_{i=1}^N \mathbb X_i$,
		$ \mathbb U=\prod_{i=1}^N \mathbb U_i$, $ \mathbb Y=\prod_{i=1}^N \mathbb Y_{1i}$, and maps
		\begin{IEEEeqnarray}{c}\notag
			\begin{IEEEeqnarraybox}[\relax][c]{rCl}
				f(x,u)&&\Let\{\intcc{x'_1;\ldots;x'_N}\,\,|\,\,x'_i\in f_i(x_i,u_i,w_i)~ \forall i\in[1;N]\},\\
				h(x)&&\Let \intcc{h_{11}(x_1);\ldots;h_{1N}(x_N)},
			\end{IEEEeqnarraybox}
		\end{IEEEeqnarray}
		where $u=\intcc{u_{1};\ldots;u_{N}}$, $x=\intcc{x_{1};\ldots;x_{N}}$ and
		with the internal variables constrained by
		$$\intcc{w_{1};\ldots;w_{N}}=M\intcc{h_{21}(x_1);\ldots;h_{2N}(x_N)}.$$
	\end{definition}
	

	\subsection{Composing simulation functions from storage functions}
	We assume that we are given $N$ control subsystems
	$\Sigma_i=\left(\mathbb X_i,\mathbb U_i,\mathbb W_i,\mathcal{U}_i,\mathcal{W}_i,f_i,\mathbb Y_{1i},\mathbb Y_{2i},h_{1i},h_{2i}\right)$
	together with their abstractions $\hat{\Sigma}_i=(\mathbb{\hat{X}}_i,\mathbb{\hat{U}}_i,\mathbb{\hat{W}}_i,\hat{\mathcal{U}}_i,\hat{\mathcal{W}}_i,\hat{f}_i,\hat{\mathbb{Y}}_{1i},\hat{\mathbb{Y}}_{2i}, \hat{h}_{1i},\hat{h}_{2i})$, and  storage functions $\mathcal{S}_i$ from
	$\hat\Sigma_i$ to $\Sigma_i$. We use $\alpha_{i}$, $\sigma_i$, $\rho_{i{ext}}$, $H_i$, $W_i$, $\hat W_i$, $X_i$, $X_i^{11}$, $X_i^{12}$, $X_i^{21}$, $X_i^{22}, $ and $ \epsilon_i$ to denote the corresponding functions, matrices, and their corresponding conformal block partitions, and constants appearing in Definition \ref{def:DS}.
	
	The next theorem provides a compositional approach on the construction of abstractions of networks of control subsystems and that of the corresponding simulation function.
	\begin{theorem}\label{t:ic}
		Consider the interconnected control system
		$\Sigma=\mathcal{I}(\Sigma_1,\ldots,\Sigma_N)$ induced by $N\in\N_{\geq1}$ control subsystems~$\Sigma_i$ and coupling matrix $M$. Suppose  each control subsystem $\Sigma_i$ admits an abstraction $\hat \Sigma_i$ with the corresponding storage function $\mathcal{S}_i$. If there exist $\mu_{i}\geq0$, $i\in[1;N]$, and matrix $\hat{M}$ of appropriate dimension such that the matrix (in)equality and inclusion
		\begin{IEEEeqnarray}{c}\label{e:MC1}
			\begin{IEEEeqnarraybox}[\relax][c]{rCl}\
				\begin{bmatrix}
					WM\\
					I_{q}
				\end{bmatrix}^T X(\mu_1X_1,\ldots,\mu_NX_N)\begin{bmatrix}
					WM\\
					I_{q}
				\end{bmatrix}\preceq0,
			\end{IEEEeqnarraybox}
		\end{IEEEeqnarray}
		\begin{IEEEeqnarray}{c}\label{e:MC2}
			\begin{IEEEeqnarraybox}[\relax][c]{rCl}\
				WMH=\hat W\hat M,
			\end{IEEEeqnarraybox}
		\end{IEEEeqnarray}
		\begin{IEEEeqnarray}{c}\label{e:MC3}
			\begin{IEEEeqnarraybox}[\relax][c]{rCl}\
				\hat M\prod_{i=1}^N \mathbb{\hat Y}_{2i} \subseteq \prod_{i=1}^N \mathbb{\hat W}_{i},
			\end{IEEEeqnarraybox}
		\end{IEEEeqnarray}
		are satisfied, where
		\begin{IEEEeqnarray}{c}\label{matrix1}
			\begin{IEEEeqnarraybox}[\relax][c]{rCl}
				&&W:=\mathsf{diag}(W_1,\ldots,W_N),\hat W:=\mathsf{diag}(\hat W_1,\ldots,\hat W_N)\\
				&&H:=\mathsf{diag}(H_1,\ldots,H_N),
			\end{IEEEeqnarraybox}
		\end{IEEEeqnarray}
		\begin{IEEEeqnarray}{c}\label{matrix}
			\begin{IEEEeqnarraybox}[\relax][c]{rCl}
				&&X(\mu_1X_1,\ldots,\mu_NX_N):=\\
				&&\begin{bmatrix}
					\mu_1X_1^{11}&&&\mu_1X_1^{12}&&\\
					&\ddots&&&\ddots&\\
					&&\mu_NX_N^{11}&&&\mu_NX_N^{12}\\
					\mu_1X_1^{21}&&&\mu_1X_1^{22}&&\\
					&\ddots&&&\ddots&\\
					&&\mu_NX_N^{21}&&&\mu_NX_N^{22}
				\end{bmatrix},
			\end{IEEEeqnarraybox}
		\end{IEEEeqnarray}
		and $q$ is the number of rows in $H$, then 
		\begin{IEEEeqnarray*}{c}
			V(x,\hat x)\Let\sum_{i=1}^N\mu_i\mathcal{S}_i(x_i,\hat x_i)
		\end{IEEEeqnarray*}
		is a simulation
		function from $\hat\Sigma=\mathcal{I}(\hat\Sigma_1,\ldots,\hat\Sigma_N)$, with the coupling matrix $\hat M$, to $\Sigma$. 
	\end{theorem}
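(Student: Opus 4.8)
The plan is to check directly that $V(x,\hat x)\Let\sum_{i=1}^{N}\mu_i\mathcal{S}_i(x_i,\hat x_i)$ meets the two requirements of Definition~\ref{def:SFD} for $\Sigma=\mathcal{I}(\Sigma_1,\dots,\Sigma_N)$ and $\hat\Sigma=\mathcal{I}(\hat\Sigma_1,\dots,\hat\Sigma_N)$; continuity of $V$ is inherited from the $\mathcal{S}_i$. First I would fix interconnection notation: with $y_2\Let[h_{21}(x_1);\dots;h_{2N}(x_N)]$ and $\hat y_2\Let[\hat h_{21}(\hat x_1);\dots;\hat h_{2N}(\hat x_N)]$, Definition~\ref{d:ics} gives the internal inputs as $[w_1;\dots;w_N]=My_2$ for $\Sigma$ and $[\hat w_1;\dots;\hat w_N]=\hat M\hat y_2$ for $\hat\Sigma$, the latter being well defined precisely because of \eqref{e:MC3}.

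For the output inequality \eqref{e:SFC1} I would use the infinity-norm structure $\Vert h(x)-\hat h(\hat x)\Vert=\max_i\Vert h_{1i}(x_i)-\hat h_{1i}(\hat x_i)\Vert$. Setting $\alpha\Let\min_i\mu_i\alpha_i\in\mathcal{K}_\infty$ and letting $j$ attain the maximum, \eqref{eq:STFC1} applied to subsystem $j$ yields $\alpha(\Vert h(x)-\hat h(\hat x)\Vert)\le\mu_j\alpha_j(\Vert h_{1j}(x_j)-\hat h_{1j}(\hat x_j)\Vert)\le\mu_j\mathcal{S}_j(x_j,\hat x_j)\le V(x,\hat x)$.

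The substantive part is the decrease condition \eqref{e:SFC2}. Given $x,\hat x,\hat u$, for each $i$ I invoke \eqref{eq:STFC2}; since there $u_i$ is quantified before (hence independently of) $w_i,\hat w_i$, each $u_i$ depends only on $(x_i,\hat x_i,\hat u_i)$ and the local selections compose into a valid $u\Let[u_1;\dots;u_N]$, while the resulting $\hat x_{di}$ assemble into $\hat x_d\in\hat f(\hat x,\hat u)$. Multiplying the $i$-th inequality by $\mu_i\ge0$ and summing gives
\[ V(x_d,\hat x_d)-V(x,\hat x)\le\sum_{i=1}^{N}\mu_i\bigl(\rho_{i\mathit{ext}}(\Vert\hat u_i\Vert)+\epsilon_i-\sigma_i(\mathcal{S}_i(x_i,\hat x_i))\bigr)+Q, \]
where $Q\Let\sum_{i=1}^{N}\mu_i\,\xi_i^{T}X_i\xi_i$ with $\xi_i\Let[W_iw_i-\hat W_i\hat w_i;\,h_{2i}(x_i)-H_i\hat h_{2i}(\hat x_i)]$. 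The decisive step is to show $Q\le0$. Reading off the interleaved block layout of $X(\mu_1X_1,\dots,\mu_NX_N)$ in \eqref{matrix}, $Q$ equals the single quadratic form $\zeta^{T}X(\mu_1X_1,\dots,\mu_NX_N)\zeta$ with $\zeta\Let[Ww-\hat W\hat w;\,y_2-H\hat y_2]$ and $W,\hat W,H$ as in \eqref{matrix1}. Substituting $w=My_2$, $\hat w=\hat M\hat y_2$ and then \eqref{e:MC2} (i.e. $\hat W\hat M=WMH$) collapses the top block to $Ww-\hat W\hat w=WM(y_2-H\hat y_2)$, so that $\zeta=\left[\begin{smallmatrix}WM\\ I_{q}\end{smallmatrix}\right](y_2-H\hat y_2)$; condition \eqref{e:MC1} then forces $Q\le0$.

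It remains to reshape the comparison-function terms. For the external input I use $\Vert\hat u_i\Vert\le\Vert\hat u\Vert=\max_i\Vert\hat u_i\Vert$ and set $\rho_{\mathit{ext}}\Let\sum_i\mu_i\rho_{i\mathit{ext}}\in\mathcal{K}_\infty\cup\{0\}$, $\varepsilon\Let\sum_i\mu_i\epsilon_i\ge0$. For the negative term I need $\sigma\in\mathcal{K}_\infty$ with $\sigma(V)\le\sum_i\mu_i\sigma_i(\mathcal{S}_i)$; choosing the index $j$ with $\mu_j\mathcal{S}_j\ge V/N$ and setting $\sigma(s)\Let\min_{i:\mu_i>0}\mu_i\sigma_i(s/(N\mu_i))$ gives $\sum_i\mu_i\sigma_i(\mathcal{S}_i)\ge\mu_j\sigma_j(\mathcal{S}_j)\ge\sigma(V)$. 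Combining these with $Q\le0$ delivers \eqref{e:SFC2}. I expect the \emph{main obstacle} to be the quadratic-form bookkeeping: correctly recognizing from \eqref{matrix} that the $\mu_i$-weighted sum of the per-subsystem forms is exactly $\zeta^{T}X(\cdots)\zeta$ (a block-permutation identity), and then using \eqref{e:MC2} to factor $\zeta$ through $\left[\begin{smallmatrix}WM\\ I_{q}\end{smallmatrix}\right]$ so that \eqref{e:MC1} applies. The comparison-function and quantifier steps are routine by comparison, though they tacitly need the $\mu_i$ to be positive (or the $\min$-constructions to yield genuine $\mathcal{K}_\infty$ functions).
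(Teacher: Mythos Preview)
Your proof is correct and follows the same skeleton as the paper's: verify \eqref{e:SFC1} from \eqref{eq:STFC1}, sum the weighted storage inequalities \eqref{eq:STFC2}, use \eqref{e:MC2} to factor $\zeta$ through $\left[\begin{smallmatrix}WM\\ I_q\end{smallmatrix}\right]$, and kill the quadratic term via \eqref{e:MC1}. The only differences are cosmetic, in how you package the comparison functions. For \eqref{e:SFC1} the paper bounds $\Vert h(x)-\hat h(\hat x)\Vert$ by the \emph{sum} $\sum_i\Vert h_{1i}(x_i)-\hat h_{1i}(\hat x_i)\Vert$ and then defines $\alpha$ implicitly as the inverse of $\overline\alpha(s)\Let\max\{\sum_i\alpha_i^{-1}(s_i)\mid\mu^T\hat s=s\}$, whereas you exploit the infinity norm directly and take $\alpha=\min_i\mu_i\alpha_i$; for the decrease term the paper sets $\sigma(s)\Let\min\{\sum_i\mu_i\sigma_i(s_i)\mid\mu^T\hat s=s\}$ while you use the pigeonhole choice $\sigma(s)=\min_{i:\mu_i>0}\mu_i\sigma_i(s/(N\mu_i))$. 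Your constructions are more explicit and elementary; the paper's are tighter but require checking that the optimization-defined functions are indeed $\mathcal{K}_\infty$. Both proofs share the tacit need for $\mu_i>0$ that you flag at the end (the paper's $\overline\alpha$ blows up if some $\mu_i=0$, for the same reason your $\min_i\mu_i\alpha_i$ vanishes).
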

	\begin{proof}
		First we show that inequality \eqref{e:SFC1} holds for some
		$\mathcal{K}_\infty$ function $\alpha$. For any
		$x=\intcc{x_1;\ldots;x_N}\in\mathbb X$ and 
		$\hat x=\intcc{\hat x_1;\ldots;\hat x_N}\in\hat{\mathbb X}$, one gets:
		\begin{IEEEeqnarray}{c}\notag
			\begin{IEEEeqnarraybox}[\relax][c]{rCl}
				\Vert h(x)&&-\hat h(\hat x)\Vert \\
				=&&\Vert [h_{11}(x_1);\ldots;h_{1N}(x_N)]-[\hat h_{11}(\hat x_1);\ldots;\hat h_{1N}(\hat x_N)]\Vert \\\notag
				\le&&\sum_{i=1}^N \Vert h_{1i}(x_i)-\hat h_{1i}(\hat{x}_i)\Vert 
				\le \sum_{i=1}^N \alpha_{i}^{-1}(\mathcal{S}_i( x_i, \hat x_i))\\\leq&& \overline{\alpha}\big(V( x, \hat x)\big),
			\end{IEEEeqnarraybox}
		\end{IEEEeqnarray}	
		where
		$\overline\alpha$ is a $\mathcal{K}_\infty$ function defined as
		\begin{IEEEeqnarray}{c}\notag
			\begin{IEEEeqnarraybox}[\relax][c]{rCl}
				\overline\alpha(s)&=&\max\limits_{\hat{s}\geq 0}\Bigg\{\sum_{i=1}^N\alpha^{-1}_i(s_i)|\mu^T\hat{s}=s\Bigg\},
			\end{IEEEeqnarraybox}
		\end{IEEEeqnarray}
		where $\hat{s}=\intcc{s_1;\ldots;s_N}\in\R^N$ and $\mu=\intcc{\mu_1;\ldots;\mu_N}$. By defining the
		$\mathcal{K}_\infty$ function $\alpha(s)=\overline\alpha^{-1}(s)$, $\forall s\in\R_{\ge0}$, one obtains
		$$\alpha(\Vert h(x)-\hat h(\hat x)\Vert )\le V( x, \hat x),$$
		satisfying inequality \eqref{e:SFC1}.
		Now we show that inequality \eqref{e:SFC2} holds as well.
		Consider any 
		$x=\intcc{x_1;\ldots;x_N}\in\mathbb X$,
		$\hat x=\intcc{\hat x_1;\ldots;\hat x_N}\in\hat{\mathbb{X}}$, and
		$\hat u=\intcc{\hat u_{1};\ldots;\hat u_{N}}\in\hat{\mathbb{U}}$. For any $i\in[1;N]$, there exists $u_i\in\mathbb U_i$, consequently, a vector $u=\intcc{u_{1};\ldots;u_{N}}\in\mathbb U$ such that  for any $x_d \in f(x,u)$ there exists  $\hat{x}_{d} \in \hat{f}(\hat{x},\hat{u})$ satisfying \eqref{eq:STFC2} for each pair of subsystems $\Sigma_i$ and $\hat\Sigma_i$
		with the internal inputs given by $\intcc{w_1;\ldots;w_N}=M[h_{21}(x_1);\ldots;h_{2N}(x_N)]$ and $\intcc{\hat w_1;\ldots;\hat w_N}=\hat M[\hat h_{21}(\hat x_1);\ldots;\hat h_{2N}(\hat x_N)]$.
		We derive the following inequality  
		\begin{IEEEeqnarray}{c}\label{rewrite}
			\begin{IEEEeqnarraybox}[\relax][c]{rCl}
				V(x_d,\hat{x}_{d})&&-V(x,\hat{x})\\
				=&&\sum_{i=1}^N \mu_i\big(\mathcal{S}_i(x_{d_i},\hat{x}_{d_i})-\mathcal{S}_i(x_i,\hat{x}_i)\big)\\
				\leq&&\sum_{i=1}^N\mu_i\bigg(-\sigma_i(\mathcal{S}_i(x_i,\hat{x}_i))+\rho_{iext}(\Vert \hat{u}_i\Vert )+\\
				&&\begin{bmatrix}
					W_iw_i-\hat W_i\hat w_i\\
					h_{2i}(x_i)-H_i\hat h_{2i}(\hat x_i)
				\end{bmatrix}^T\overbrace{\begin{bmatrix}
						X_i^{11}&X_i^{12}\\
						X_i^{21}&X_i^{22}
				\end{bmatrix}}^{X_i:=}\\
				&&\begin{bmatrix}
					W_iw_i-\hat W_i\hat w_i\\
					h_{2i}(x_i)-H_i\hat h_{2i}(\hat x_i)
				\end{bmatrix}+\epsilon_i\bigg).
			\end{IEEEeqnarraybox}
		\end{IEEEeqnarray}
		Using conditions \eqref{e:MC1}, \eqref{e:MC2}, and the definition of matrices $W$, $\hat W$, $H$, and $X$ in \eqref{matrix1} and \eqref{matrix}, the inequality \eqref{rewrite} can be rewritten as
		\begin{IEEEeqnarray}{c}\notag
			\begin{IEEEeqnarraybox}[\relax][c]{rCl}
				V(&&x_d,\hat{x}_{d})-V(x,\hat{x})\\
				\leq&&\sum_{i=1}^N\mu_i\big(-\sigma_i(\mathcal{S}_i(x_i,\hat{x}_i))\big)+\sum_{i=1}^N\mu_i\rho_{iext}(\Vert \hat{u}_i\Vert )+\sum_{i=1}^N\mu_i \epsilon_i\\
				&&+\begin{bmatrix}
					W\begin{bmatrix}
						w_1\\
						\vdots\\
						w_N
					\end{bmatrix}-\hat W\begin{bmatrix}
						\hat w_1\\
						\vdots\\
						\hat w_N
					\end{bmatrix}\\
					h_{21}(x_1)-H_1\hat h_{21}(\hat x_1)\\
					\vdots\\
					h_{2N}(x_N)-H_N\hat h_{2N}(\hat x_N)
				\end{bmatrix}^TX(\mu_1X_1,\ldots,\mu_NX_N)\\
				&&~~\begin{bmatrix}
					W\begin{bmatrix}
						w_1\\
						\vdots\\
						w_N
					\end{bmatrix}-\hat W\begin{bmatrix}
						\hat w_1\\
						\vdots\\
						\hat w_N
					\end{bmatrix}\\
					h_{21}(x_1)-H_1\hat h_{21}(\hat x_1)\\
					\vdots\\
					h_{2N}(x_N)-H_N\hat h_{2N}(\hat x_N)
				\end{bmatrix}\\\notag
				\leq&&\sum_{i=1}^N\mu_i\big(-\sigma_i(\mathcal{S}_i(x_i,\hat{x}_i))\big)+\sum_{i=1}^N\mu_i\rho_{iext}(\Vert \hat{u}_i\Vert )+\sum_{i=1}^N\mu_i \epsilon_i\\
				&&+\begin{bmatrix}
					h_{21}(x_1)-H_1\hat h_{21}(\hat x_1)\\
					\vdots\\
					h_{2N}(x_N)-H_N\hat h_{2N}(\hat x_N)
				\end{bmatrix}^T\\
				&&~~\begin{bmatrix}
					WM\\
					I_{q}
				\end{bmatrix}^TX(\mu_1X_1,\ldots,\mu_NX_N)\begin{bmatrix}
					WM\\
					I_{q}
				\end{bmatrix}\\
				&&~~\begin{bmatrix}
					h_{21}(x_1)-H_1\hat h_{21}(\hat x_1)\\
					\vdots\\
					h_{2N}(x_N)-H_N\hat h_{2N}(\hat x_N)
				\end{bmatrix}\\\notag
				\leq&&\sum_{i=1}^N\mu_i\big(-\sigma_i(\mathcal{S}_i(x_i,\hat{x}_i))\big)+\sum_{i=1}^N\mu_i\rho_{iext}(\Vert \hat{u}_i\Vert )+\sum_{i=1}^N\mu_i \epsilon_i.
			\end{IEEEeqnarraybox}
		\end{IEEEeqnarray}
		Remark that condition \eqref{e:MC3} ensures that the interconnection $\hat\Sigma=\mathcal{I}(\hat\Sigma_1,\ldots,\hat\Sigma_N)$ is well-defined. By defining 
		\begin{IEEEeqnarray}{c}\notag
			\begin{IEEEeqnarraybox}[\relax][c]{rCl}
				\sigma(s)&&\Let\min\limits_{\hat{s}\geq 0}\Bigg\{\sum_{i=1}^N\mu_i\sigma_i(s_i)|\mu^T\hat{s}=s\Bigg\},\varepsilon\Let\sum_{i=1}^N\mu_i \epsilon_i,\\
				\rho_{ext}(s)&&\Let\max\limits_{\hat{s}\geq 0}\Bigg\{\sum_{i=1}^N\mu_i\rho_{iext}(\Vert {s_i}\Vert )\,\,|\,\,\Vert \hat{s}\Vert =s\Bigg\}, 				
			\end{IEEEeqnarraybox}
		\end{IEEEeqnarray}
		where $\sigma\in\mathcal{K}_\infty$ and $\rho_{\mathrm{ext}}\in\mathcal{K}_\infty\cup\{0\}$,
		we readily have
		\begin{IEEEeqnarray}{c}\notag
			\begin{IEEEeqnarraybox}[\relax][c]{rCl}
				V&&(x_d,\hat{x}_d)-V(x,\hat{x})\leq-\sigma\left(V\left( x,\hat{x}\right)\right)+\rho_{\mathrm{ext}}(\Vert \hat{u}\Vert )+\varepsilon,
			\end{IEEEeqnarraybox}
		\end{IEEEeqnarray}
		which satisfies inequality \eqref{e:SFC2}. Hence, $V$ is a simulation function from $\hat \Sigma$ to $\Sigma$. 
	\end{proof}

	\section{Construction of Symbolic Models}\label{1:IV}
	
	In the previous sections, $\Sigma$ and $\hat{\Sigma}$ were considered as general discrete-time control systems, deterministic or nondeterministic, finite or infinite, that can be related to each other through a storage function  or a simulation function when $\Sigma$ and $\hat{\Sigma}$ are networks of control subsystems. In this section, we consider $\Sigma$ as an infinite, deterministic, control system and $\hat{\Sigma}$ as its finite abstraction.
	In addition, the storage function from $\hat{\Sigma}$ to $\Sigma$ is established under the assumption that the original discrete-time control system $\Sigma$ is so-called incrementally passivable.
	
	In order to show our next main result we introduce the following definition.
	
	\begin{definition}\label{ass:1}
		Control system 
		$$\Sigma=(\mathbb X,\mathbb U,\mathbb W,\mathcal{U},\mathcal{W},f,\mathbb X,\mathbb Y_{2},1_{\mathbb X,},h_2),$$
		is called incrementally passivable if there exist functions $\mathcal{H}:\mathbb X \to \mathbb U$ and  $ \mathcal{G}:\mathbb X \times \mathbb X \to \mathbb{R}_{\geq0} $  such that $\forall x,x'\in \mathbb X$, $\forall u\in \mathbb U$, $\forall w,w' \in \mathbb W$, the inequalities:
		\begin{IEEEeqnarray}{c}\label{eq:ISTFC1}
			\begin{IEEEeqnarraybox}[\relax][c]{rCl}
				\underline{\alpha} (\Vert x-x'\Vert ) &\leq& \mathcal{G}(x,x'),
			\end{IEEEeqnarraybox}
		\end{IEEEeqnarray}
		and
		\begin{IEEEeqnarray}{c}\label{eq:ISTFC2}
			\begin{IEEEeqnarraybox}[\relax][c]{rCl}
				\mathcal{G}&&(f(x,\mathcal{H}(x)+u,w),f(x',\mathcal{H}(x')+u,w'))-\mathcal{G}(x,x')\\
				\leq&&-\hat{\kappa}(\mathcal{G}(x,x'))+\\
				&&\begin{bmatrix}w-w'\\
					h_2(x)-h_2(x')
				\end{bmatrix}^T\overbrace{\begin{bmatrix}
						X^{11}&X^{12}\\
						X^{21}&X^{22}
				\end{bmatrix}}^{X:=}\begin{bmatrix}
					w-w'\\
					h_2(x)-h_2(x')
				\end{bmatrix},
			\end{IEEEeqnarraybox}
		\end{IEEEeqnarray}
		hold for some $\underline{\alpha}, \hat{\kappa} \in \mathcal{K}_{\infty} $, and matrix $X$ of appropriate dimension.
		
		\begin{remark}
			Note that any stabilizable linear control system is incrementally passivable as in Definition \ref{ass:1}. Moreover, any incrementally input-to-state stabilizable control system as in Definition \ref{ass:2} with $\rho_{int}(r)=cr^2$, for some $c\in \R_{>0}$ and any $r\in \R_{\ge0}$, is also incrementally passivable. 
		\end{remark}
		
		\begin{definition}\label{ass:2}
			Control system 
			$$\Sigma=(\mathbb X,\mathbb U,\mathbb W,\mathcal{U},\mathcal{W},f,\mathbb X,\mathbb Y_{2},1_{\mathbb X,},h_2)$$
			is called incrementally input-to-state stabilizable with respect to the internal input if there exist functions $\mathcal{H}:\mathbb X \to \mathbb U$ and  $ \hat{V}:\mathbb X \times \mathbb X \to \mathbb{R}_{\geq0} $  such that $\forall x,x'\in \mathbb X$, $\forall u\in \mathbb U$, $\forall w,w' \in \mathbb W$, the inequalities:
			\begin{IEEEeqnarray}{c}\label{eq:ISTFC3}
				\begin{IEEEeqnarraybox}[\relax][c]{rCl}
					\underline{\alpha} (\Vert x-x'\Vert ) &\leq& \hat{V}(x,x')\leq \overline{\alpha}(\Vert x-x'\Vert ),
				\end{IEEEeqnarraybox}
			\end{IEEEeqnarray}
			and
			\begin{IEEEeqnarray}{c}\label{eq:ISTFC4}
				\begin{IEEEeqnarraybox}[\relax][c]{rCl}
					\hat{V}&&(f(x,\mathcal{H}(x)+u,w),f(x',\mathcal{H}(x')+u,w'))\\-&&\hat{V}(x,x')
					\leq-\hat{\kappa}(\hat{V}(x,x'))+\rho_{int}(\Vert w-w'\Vert),
				\end{IEEEeqnarraybox}
			\end{IEEEeqnarray}
			hold for some $\underline{\alpha}, \overline{\alpha}, \hat{\kappa} \in \mathcal{K}_{\infty}$, and $\rho_{int} \in \mathcal{K}_{\infty}$.
		\end{definition}
		We refer interested readers to \cite{ruffer} for detailed information on incremental stability of discrete-time control systems. 	
	\end{definition}
	Now, we construct a finite abstraction $\hat\Sigma$ of an incrementally passivable control system $\Sigma$.
	\begin{definition}\label{def:sym}
		Given an incrementally passivable control system $\Sigma=(\mathbb X,\mathbb U,\mathbb W,\mathcal{U},\mathcal{W},f,\mathbb X,\mathbb Y_{2},1_{\mathbb X,},h_2)$, where $\mathbb X,\mathbb U,\mathbb W$ are assumed to be finite unions of boxes, one can construct a finite system   
		\begin{IEEEeqnarray}{c}\label{eq:14}
			\begin{IEEEeqnarraybox}[\relax][c]{rCl}
				\hat{\Sigma}&=&(\mathbb{\hat{X}},\mathbb{\hat{U}},\mathbb{\hat{W}},\hat{\mathcal{U}},\hat{\mathcal{W}},\hat{f},\hat{\mathbb{Y}}_1,\hat{\mathbb{Y}}_2, \hat{h}_1,\hat{h}_2),
			\end{IEEEeqnarraybox}
		\end{IEEEeqnarray}
		where:
		\begin{itemize}
			\item $\mathbb{\hat{X}}=[\mathbb X]_\eta$, where $0<\eta\leq\emph{span}(\mathbb X)$ is a state set quantization parameter; 
			\item $\mathbb{\hat{U}}=[\mathbb U]_{\mu_1}$, where $0<\mu_{1}\leq\emph{span}(\mathbb U)$ is the external input set quantization parameter;
			\item $\hat{x}_{d}\in\hat{f}(\hat{x},\hat{u},\hat{w})$ iff $\Vert\hat{x}_{d}-f(\hat{x},\mathcal{H}(\hat{x})+\hat{u},\hat{w})\Vert \leq\eta/2$;
			\item $\hat{\mathbb{Y}}_1=\mathbb{\hat{X}} $, $\hat{\mathbb{Y}}_2=\{ h_2({\hat{x})| \hat{x} \in \hat{\mathbb X}}\}$, $\hat{h}_1=1_{\hat{X}}$ and $\hat{h}_2=h_2 $;
			\item $\mathbb{\hat{W}}$ is an appropriate finite internal input set satisfying condition \eqref{e:MC3} in the compositional setting (cf. the example section).
		\end{itemize}
		
	\end{definition}
	
	Now, we present one of the main results of the paper establishing the relation between $\Sigma$ and $\hat{\Sigma}$, introduced above, via the notion of storage function.  
	\begin{theorem}\label{thm:2}
		Let $\Sigma$ be an incrementally passivable control system as in Definition \ref{ass:1} and $\hat{\Sigma}$ be a finite system as in Definition \ref{def:sym}. Assume that there exists a function $\gamma\in\mathcal{K}_{\infty}$  such that for any $x,x',x'' \in \mathbb{X}$ one has
		\begin{IEEEeqnarray}{c}\label{eq:TI}
			\begin{IEEEeqnarraybox}[\relax][c]{rCl}
				\mathcal{G}(x,x')-\mathcal{G}(x,x'')\leq \gamma(\Vert x'-x''\Vert),
			\end{IEEEeqnarraybox}
		\end{IEEEeqnarray}
		for $\mathcal G$ as in Definition \ref{ass:1}.
		Then $\mathcal{G}$ is a storage function from $\hat{\Sigma}$ to $\Sigma$.
	\end{theorem}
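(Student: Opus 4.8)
The plan is to verify directly the two defining conditions of a storage function in Definition~\ref{def:DS} for the candidate $\mathcal{S}=\mathcal{G}$, reading off the matrices $W,\hat W,H$, the comparison functions $\alpha,\sigma,\rho_{ext}$, and the constant $\epsilon$ from the incremental-passivability data of $\Sigma$. Since in this section $\Sigma$ has external output map $h_1=1_{\mathbb X}$ and the abstraction has $\hat h_1=1_{\hat{\mathbb X}}$ with $\hat{\mathbb X}\subseteq\mathbb X$, the first condition \eqref{eq:STFC1} collapses to $\alpha(\Vert x-\hat x\Vert)\le\mathcal{G}(x,\hat x)$; this is precisely \eqref{eq:ISTFC1} evaluated at $x'=\hat x$, so I would simply take $\alpha=\underline{\alpha}$.

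The heart of the argument is the second condition \eqref{eq:STFC2}. Given $x$, $\hat x$ and $\hat u$, the key idea is to cancel the stabilizing feedback by choosing the concrete external input $u:=\mathcal{H}(x)+\hat u$ (admissible under the standing assumption that such inputs remain in $\mathbb U$), so that the realized concrete successor is $x_d=f(x,\mathcal{H}(x)+\hat u,w)$. On the abstraction side I would introduce the nominal successor $x_d^{\ast}:=f(\hat x,\mathcal{H}(\hat x)+\hat u,\hat w)$ and select $\hat x_d\in\hat{\mathbb X}$ as a nearest grid point to $x_d^{\ast}$; by the construction $\hat{\mathbb X}=[\mathbb X]_\eta$ in Definition~\ref{def:sym} this can be done with $\Vert\hat x_d-x_d^{\ast}\Vert\le\eta/2$, and this bound is exactly the membership condition $\hat x_d\in\hat f(\hat x,\hat u,\hat w)$, so the choice is admissible. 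Applying \eqref{eq:ISTFC2} with $x'=\hat x$, $w'=\hat w$ and the common incremental input $\hat u$ then bounds $\mathcal{G}(x_d,x_d^{\ast})-\mathcal{G}(x,\hat x)$ by $-\hat\kappa(\mathcal{G}(x,\hat x))$ plus the quadratic supply term in $(w-\hat w,\,h_2(x)-h_2(\hat x))$.

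To pass from the nominal successor $x_d^{\ast}$ to the quantized state $\hat x_d$ I would invoke the hypothesis \eqref{eq:TI} with the triple $(x_d,\hat x_d,x_d^{\ast})$, giving $\mathcal{G}(x_d,\hat x_d)-\mathcal{G}(x_d,x_d^{\ast})\le\gamma(\Vert\hat x_d-x_d^{\ast}\Vert)\le\gamma(\eta/2)$. Summing the two estimates produces
\begin{align*}
\mathcal{G}(x_d,\hat x_d)-\mathcal{G}(x,\hat x)\le{}&-\hat\kappa(\mathcal{G}(x,\hat x))+\gamma(\eta/2)\\
&{}+\begin{bmatrix}w-\hat w\\ h_2(x)-h_2(\hat x)\end{bmatrix}^{T}X\begin{bmatrix}w-\hat w\\ h_2(x)-h_2(\hat x)\end{bmatrix}.
\end{align*}
Recalling that $\hat h_2=h_2$, this is exactly \eqref{eq:STFC2} with $\sigma=\hat\kappa$, $W=\hat W=I$, $H=I$, $\rho_{ext}\equiv0$ (the external input having been folded entirely into $u$), and $\epsilon=\gamma(\eta/2)$, which finishes the verification.

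I expect the only delicate points to be bookkeeping rather than conceptual. First, one must confirm that the nearest-point selection genuinely lands in $\hat f(\hat x,\hat u,\hat w)$, which relies on the $\eta/2$ covering property of $[\mathbb X]_\eta$ together with $x_d^{\ast}\in\mathbb X$. Second, one must check the admissibility of $u=\mathcal{H}(x)+\hat u$ and justify that $\rho_{ext}$ can be taken to be $0$, since $\hat u$ enters only through this input and leaves no residual term. The inequality \eqref{eq:TI} is the crucial ingredient: it is what converts the deterministic passivity estimate about the nominal successor $x_d^{\ast}$ into one about the quantized state $\hat x_d$, thereby turning the quantization radius $\eta/2$ into the additive constant $\epsilon$.
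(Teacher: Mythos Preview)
Your proposal is correct and follows essentially the same route as the paper's proof: take $\alpha=\underline{\alpha}$ for \eqref{eq:STFC1}, choose $u=\mathcal{H}(x)+\hat u$, use \eqref{eq:ISTFC2} to control $\mathcal{G}(x_d,x_d^{\ast})-\mathcal{G}(x,\hat x)$, use \eqref{eq:TI} with the triple $(x_d,\hat x_d,x_d^{\ast})$ to absorb the quantization into $\gamma(\eta/2)$, and read off $\sigma=\hat\kappa$, $\rho_{ext}\equiv0$, $W=\hat W=H=I$, $\epsilon=\gamma(\eta/2)$. The only cosmetic difference is that the paper establishes the decay inequality for \emph{every} $\hat x_d\in\hat f(\hat x,\hat u,\hat w)$ (relying on the standing non-blocking assumption for existence), whereas you explicitly pick a nearest grid point; both are adequate since Definition~\ref{def:DS} requires only $\exists\,\hat x_d$.
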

	
	\begin{proof} Since system $\Sigma$ is incrementally passivable, from \eqref{eq:ISTFC1}, $\forall x\in \mathbb{X}$ and $ \forall \hat{x} \in \mathbb{\hat{X}}
		$,  we have 
		\begin{IEEEeqnarray}{c}\notag
			\begin{IEEEeqnarraybox}[\relax][c]{rCl}
				\underline{\alpha} (\Vert x-\hat{x}\Vert )= \underline{\alpha} (\Vert h_1(x)-\hat{h}_1(\hat{x})\Vert ) &\leq& \mathcal{G}(x,\hat{x}),
			\end{IEEEeqnarraybox}
		\end{IEEEeqnarray}	
		satisfying \eqref{eq:STFC1} with  $\alpha(s) \Let \underline{\alpha}(s) $ $\forall s\in \R_{\geq0}$.
		Now from \eqref{eq:TI}, $\forall x\in \mathbb{X}, \forall \hat{x} \in \mathbb{\hat{X}}, \forall \hat{u} \in \mathbb{\hat{U}},\forall w \in \mathbb{W},\forall \hat{w} \in \mathbb{\hat{W}}$, we have 
		\begin{IEEEeqnarray}{c}\notag
			\begin{IEEEeqnarraybox}[\relax][c]{rCl}
				&&\mathcal{G}(f(x,\mathcal{H}(x)+\hat{u},w),\hat{x}_{d})-\\
				&&\mathcal{G}(f(x,\mathcal{H}(x)+\hat{u},w),f(\hat{x},\mathcal{H}(\hat{x})+\hat{u},\hat{w}))\\
				&&\leq\gamma(\Vert\hat{x}_{d}-f(\hat{x},\mathcal{H}(\hat{x})+\hat{u},\hat{w})\Vert),
			\end{IEEEeqnarraybox}
		\end{IEEEeqnarray}
		for any $\hat{x}_{d}\in\hat{f}(\hat{x},\hat{u},\hat{w})$.
		Now, from Definition \ref{def:sym}, the above inequality reduces to
		\begin{IEEEeqnarray}{c}\notag
			\begin{IEEEeqnarraybox}[\relax][c]{rCl}
				\mathcal{G}(f(x&&,\mathcal{H}(x)+\hat{u},w),\hat{x}_{d})\\
				\leq&&\mathcal{G}(f(x,\mathcal{H}(x)+\hat{u},w),f(\hat{x},\mathcal{H}(\hat{x})+\hat{u},\hat{w}))+ \gamma(\eta/2).
			\end{IEEEeqnarraybox}
		\end{IEEEeqnarray}
		Note that by \eqref{eq:ISTFC2} and since $h_2=\hat h_2$, we get 
		\begin{IEEEeqnarray}{c}\notag
			\begin{IEEEeqnarraybox}[\relax][c]{rCl}
				\mathcal{G}&&(f(x,\mathcal{H}(x)+\hat{u},w),f(\hat{x},\mathcal{H}(\hat{x})+\hat{u},\hat{w}))-\mathcal{G}(x,\hat{x})\\
				\leq&&-\hat{\kappa}(\mathcal{G}(x,\hat{x}))+\\
				&&\begin{bmatrix}w-\hat w\\
					h_2(x)-\hat h_2(\hat x)
				\end{bmatrix}^T\begin{bmatrix}
					X^{11}&X^{12}\\
					X^{21}&X^{22}
				\end{bmatrix}\begin{bmatrix}
					w-\hat w\\
					h_2(x)-\hat h_2(\hat x)
				\end{bmatrix}.
			\end{IEEEeqnarraybox}
		\end{IEEEeqnarray}	
		It follows that $\forall x\in \mathbb{X}, \forall \hat{x} \in \mathbb{\hat{X}}, \forall \hat{u} \in \mathbb{\hat{U}},$ and $\forall w \in \mathbb{W},\forall \hat{w} \in \mathbb{\hat{W}}
		$, one obtains
		\begin{IEEEeqnarray}{c}\notag
			\begin{IEEEeqnarraybox}[\relax][c]{rCl}
				&&\mathcal{G}(f(x,\mathcal{H}(x)+\hat{u},w),\hat{x}_{d})-\mathcal{G}(x,\hat{x})\leq-\hat{\kappa}(\mathcal{G}(x,\hat{x}))+\\
				&&\begin{bmatrix}w-\hat w\\
					h_2(x)-\hat h_2(\hat x)
				\end{bmatrix}^T\begin{bmatrix}
					X^{11}&X^{12}\\
					X^{21}&X^{22}
				\end{bmatrix}\begin{bmatrix}
					w-\hat w\\
					h_2(x)-\hat h_2(\hat x)
				\end{bmatrix}+\gamma(\eta/2),\\
			\end{IEEEeqnarraybox}
		\end{IEEEeqnarray}
		for any $\hat{x}_{d}\in\hat{f}(\hat{x},\hat{u},\hat{w})$, satisfying \eqref{eq:STFC2}
		with $\epsilon=\gamma(\eta/2)$, $u=\mathcal{H}(x)+\hat{u}$, $\sigma=\hat{\kappa}$, $\rho_{ext}\equiv 0, $ $W,$ $ \hat{W},$ $H $ are identity matrices of appropriate dimensions. Hence, $\mathcal{G}$ is a storage function from $\hat \Sigma$ to $\Sigma$.  		
	\end{proof}	

	\section{Example}\label{1:V}
	Consider a linear control system $\Sigma$ described by 
	\begin{IEEEeqnarray*}{c}
		\Sigma:\left\{
		\begin{IEEEeqnarraybox}[\relax][c]{rCl}
			\mathbf{x}(k+1)&=&A\mathbf{x}(k)+\nu(k),\\
			\mathbf{y}(k)&=&\mathbf{x}(k),%
		\end{IEEEeqnarraybox}\right.
	\end{IEEEeqnarray*}
	where $A=e^{-L\tau}$ for some matrix $L \in\R^{n\times n}$ and constant $\tau \in \R_{>0}$. Assume $L$ is the Laplacian matrix \cite{Godsil} of an undirected graph.

	We partition $\mathbf{x}(k)$ as $\mathbf{x}(k)=[\mathbf{x}_1(k);\ldots;\mathbf{x}_N(k)]$ and $\nu(k)$ as $\nu(k)=[\nu_1(k);\ldots;\nu_N(k)]$ where $\mathbf{x}_i(k)$ and $\nu_i(k)$ are both taking values in $\mathbb{R}^{n_i}$, $\forall i\in[1;N]$. Now, by introducing $\Sigma_i$ described by
	\begin{IEEEeqnarray*}{c}
		\Sigma_i:\left\{
		\begin{IEEEeqnarraybox}[\relax][c]{rCl}
			\mathbf{x}_i(k+1)&=&\mathbf{x}_i(k)+\omega_i(k)+\nu_i(k),\\
			\mathbf{y}_{1i}(k)&=&\mathbf{x}_i(k),\\
			\mathbf{y}_{2i}(k)&=&\mathbf{x}_i(k),%
		\end{IEEEeqnarraybox}\right.
	\end{IEEEeqnarray*}
	one can readily verify that $\Sigma=\mathcal{I}(\Sigma_1,\ldots,\Sigma_N)$ where the coupling matrix $M$ is given by $M=A-I_n$.
	
	Consider systems $\hat\Sigma_i$ constructed as in Definition \ref{def:sym}. 
	One can readily verify that, for any $i\in[1;N]$, conditions \eqref{eq:ISTFC1} and \eqref{eq:ISTFC2} are satisfied with $\mathcal{G}_i(x_i,\hat{x}_i)=(x_i-\hat x_i)^T(x_i-\hat x_i)$, $\underline{\alpha}_{i}(r)=r^2$, $\hat{\kappa}_i(r)=\lambda r$, $\rho_{iext}(r)\equiv0$, $\forall r\in\R_{\ge0}$, $\mathcal{H}_i(x_i)=-\lambda x_i$, $\forall \lambda \in(0,0.5]$, and 
	\begin{IEEEeqnarray}{c}\label{passivity}
		\begin{IEEEeqnarraybox}[\relax][c]{rCl} 
			X_i=\begin{bmatrix} I_{n_i} & (1-\lambda) I_{n_i} \\ (1-\lambda) I_{n_i} &  0_{n_i} \end{bmatrix}.
		\end{IEEEeqnarraybox}
	\end{IEEEeqnarray}
	Hence, $\mathcal{G}_i(x_i,\hat{x}_i)=(x_i-\hat x_i)^T(x_i-\hat x_i)$ is a storage function from $\hat\Sigma_i$ to $\Sigma_i$.
	
	Now, we look at $\hat\Sigma=\mathcal{I}(\hat\Sigma_1,\ldots,\hat\Sigma_N)$ with a coupling matrix $\hat M$ satisfying condition \eqref{e:MC2} as follows:
	\begin{IEEEeqnarray}{c}\label{graph}
		\begin{IEEEeqnarraybox}[\relax][c]{rCl} 
			A-I_n=\hat M.
		\end{IEEEeqnarraybox}
	\end{IEEEeqnarray}
	
	Choosing $\mu_1=\cdots=\mu_N=1$ and using $X_i$ in (\ref{passivity}), matrix $X$ in \eqref{matrix} reduces to
	$$
	X=\begin{bmatrix} I_{n} & (1-\lambda)I_n \\ (1-\lambda)I_n &  0_n \end{bmatrix},
	$$
	and condition \eqref{e:MC1} reduces to
	$$
	\begin{bmatrix} A-I_n \\ I_n \end{bmatrix}^T\hspace{-3pt}X\hspace{-2pt}\begin{bmatrix} A-I_n \\ I_n \end{bmatrix}=(A-I_n)\Big(A-I_n+2(1-\lambda)I_n\Big)\preceq 0,
	$$
	which always holds without any restrictions on the number of the subsystems. In order to show the above inequality, we used $A^T=A\succeq0$, $A-I_n\preceq0$, and $2(1-\lambda)I_n +A-I_n\succeq0 $. Note that by choosing finite internal input sets $\mathbb{\hat{W}}_i$ of $\hat\Sigma_i$ in such a way that $\prod_{i=1}^N\mathbb{\hat{W}}_i=(A-I_n)\prod_{i=1}^N\mathbb{\hat{X}}_i$, condition \eqref{e:MC3} is satisfied.
	
	Now, one can verify that $V(x,\hat{x})=\sum_{i=1}^N(x_i-\hat x_i)^T(x_i-\hat x_i)$ is a simulation function from $\hat{\Sigma}$ to $\Sigma$
	satisfying conditions \eqref{e:SFC1} and \eqref{e:SFC2} with $\alpha(r)=r^2$, $\sigma(r)=\lambda r$, $\rho_{ext}(r)\equiv0$ $\forall r\in\R_{\ge0}$, $\varepsilon=\sum_{i=1}^{N}\gamma_i({\eta_i/2})$, where $\eta_i$ is the state set quantization parameter of abstraction $\hat\Sigma_i$ and $\gamma_i$ is the $\mathcal{K}_{\infty}$ function satisfying \eqref{eq:TI} for $\mathcal{G}_i$.  
	
	\section{Conclusion}
	In this paper, we proposed a compositional framework for the construction of finite abstractions (a.k.a. symbolic models) of interconnected discrete-time control systems. First, we used a notion of so-called storage functions in order to construct compositionally a notion of so-called simulation functions that is used to quantify the error between the output behavior of the overall interconnected concrete system and the one of its finite abstraction. 
	Furthermore, we provided an approach to construct finite abstractions together with their corresponding storage functions for a class of discrete-time control systems under some incremental passivity property.
	Finally, we demonstrated the effectiveness of the proposed results by constructing a finite abstraction of a network of linear discrete-time control systems and its corresponding simulation function in a compositional fashion and independently of the number of subsystems.

	\bibliographystyle{ieeetr}
	\bibliography{refrences}

\end{document}